\documentclass[a4paper, 11pt]{article}
\usepackage{amsthm}
\usepackage{amsfonts}
\usepackage[T1]{fontenc}

\newtheorem{theorem}{Theorem}[section]
\newtheorem{proposition}[theorem]{Proposition}

\begin{document}
\title{Noncommutative Ricci flow in a matrix geometry}
\author{Rocco Duvenhage\\
Department of Physics, University of Pretoria, Pretoria 0002 
\\South Africa
\\rocco.duvenhage@up.ac.za}
\date{2013-12-3}
\maketitle

\begin{abstract}
We study noncommutative Ricci flow in a finite-dimensional representation of a
noncommutative torus. It is shown that the flow exists and converges to the
flat metric. We also consider the evolution of entropy and a definition of
scalar curvature in terms of the Ricci flow.

\end{abstract}

\section{Introduction}

\ Recently three different approaches to Ricci flow in noncommutative geometry
were initiated in \cite{V}, \cite{BM} and \cite{CM} respectively, the latter
two focussing on the case of a noncommutative torus. In this paper we develop
a more elementary approach in the case of a simple matrix geometry, namely a
finite-dimensional representation of the rational noncommutative torus often 
called the fuzzy torus.

The setting we use, in particular the type of the metric we consider, is of
the same form as that of \cite{CT} which also forms the basis for \cite{BM}
and \cite{CM}, although due to our finite-dimensional setup the situation is
simpler and we do not make direct use of results in \cite{CT}. Our formulation
of the Ricci flow is more direct than the spectral approach of \cite{BM},
and similar to that of \cite{CM}.

We note that since the introduction of Ricci flow by Hamilton \cite{H} as an
approach to the Thurston geometrization conjecture and in particular the
Poincar\'{e} conjecture, it has been very useful in geometry and topology, as
is by now well-known, in particular due to the work of Perelman \cite{P1},
\cite{P2}, \cite{P3} proving the Poincar\'{e} conjecture. It should also be
mentioned though, that at about the same time that Hamilton introduced Ricci
flow, it appeared independently and for very different reasons in the study
of renormalization of sigma models in physics by Friedan \cite{F}, \cite{F2}.
Considering the importance of Ricci flow in geometry, topology and physics on
the one hand, and the development and potential physical applications of
noncommutative geometry on the other, it is natural to
explore Ricci flow in the noncommutative framework. This is the general
motivation for this paper.

More specifically we are motivated by the goal to have a simple version of
noncommutative Ricci flow. The approach followed here is concrete enough that
one can prove certain results that in the other approaches appear difficult,
for example we show that in our setting the metric flows to the flat metric as
it does in the case of a classical torus \cite{H2}. It should be kept in mind
though that this does not solve the corresponding, but more difficult problem
for the noncommutative torus mentioned in \cite{BM}. Part of the reason that
we can prove results like these, is that in this approach in terms of a 
finite-dimensional representation, the Ricci flow is given by a system of first order
ordinary differential equations (rather than a partial differential equation
as in the classical case).

The latter point will become clear in Section 2, where we also briefly review
the ideas from matrix geometry that we need, and describe the metrics that we
consider. In Section 3 the existence and convergence to the flat metric of the
Ricci flow is shown. In Section 4 we explain how the metric can be viewed as a
density matrix and study the evolution of the resulting von Neumann entropy
under the Ricci flow. Section 5 briefly considers how the scalar curvature of
the matrix geometry can be unambiguously defined via the Ricci flow, and is
closely related to the work in \cite{CM}. The relevance of this is that, as
discussed for example in \cite{CM}, \cite{FK} and \cite{A}, defining scalar
curvature is somewhat more subtle in noncommutative geometry than in the
classical case. Noncommutative Ricci flow provides a particularly elegant
approach to defining scalar curvature, at least in our framework. Finally, in
Section 6, we outline possible further work that can be explored.

\section{Basic ideas and definitions}

Here we review relevant background and set up the framework used in the rest
of the paper.

Recall that Ricci flow is given by \cite{H}
\[
\frac{\partial g_{\mu\nu}}{\partial t}=-2R_{\mu\nu}
\]
where $g_{\mu\nu}$ is a metric on some differentiable manifold, $R_{\mu\nu}$
is the corresponding Ricci tensor, and $t$ is a real variable (``time''). If
we restrict ourselves to surfaces and to metrics of the form $g_{\mu\nu
}=c\delta_{\mu\nu}$ where $\delta_{\mu\nu}$ is the Kronecker delta and $c$ is
some strictly positive function on the surface (a conformal rescaling factor),
then it can be shown that the above Ricci flow equation reduces to
\[
\frac{\partial c}{\partial t}=\partial_{\mu}\partial_{\mu}\log c
\]
where $\partial_{\mu}$ is the partial derivative with respect to the
coordinate $x^{\mu}$ on the surface, and we sum over the repeated index
$\mu=1,2$. An interesting metric of this form is the cigar metric of Hamilton
\cite{H2}, also known as the (euclidean version of) the Witten black hole
\cite{W}, \cite{MSW}, \cite{Ve}, given by
\[
c=\frac{1}{M+x^{\mu}x^{\mu}}
\]
where the parameter $M>0$ is independent of $x^{\mu}$ and can be interpreted
as describing the mass of the black hole. In the case of the manifold
$\mathbb{R}^{2}$, this metric is a soliton for the Ricci flow, however as one
might expect from the general theory of Ricci flow for compact surfaces (see
in particular \cite[Theorem 10.1]{H2}) solitons other than the flat metric do not
appear in the noncommutative case below where we consider a noncommutative
version of the torus, which classically is a compact manifold.

Next we recall the matrix geometry we are going to work with. The basic ideas
regarding matrix geometries originated in \cite{Ho} and \cite{M}, the latter
including a brief discussion of fuzzy tori. Also see for example 
\cite[Section 3.1]{M2}, \cite{FFZ} and \cite{Ho2} for work more directly
related to fuzzy tori. We consider two unitary $n\times n$ matrices $u$ and $v$ 
satisfying the commutation relation
\[
vu=quv
\]
where
\[
q=e^{2\pi im/n}
\]
for an $m\in\left\{1,2,...,n-1\right\}$ such that $m$ and $n$ are relatively
prime; note in particular that $q^{n}=1$, but $q^{j}\neq1$ for $j=1,...,n-1$. 
This commutation relation is similar to the Weyl form of the commutation 
relation between position and momentum of a quantum particle in one dimension. 
Indeed, the analogy is very close, since as in the case of position and momentum, 
$u$ and $v$ are connected by a Fourier transform, but for the cyclic group 
$\mathbb{Z}_{n}=\left\{0,1,2,...,n-1\right\}$ instead of the group $\mathbb{R}$ 
as for position and momentum. It is worth mentioning that matrices of this form 
already appeared long ago in the physics literature, for example in \cite{S}, 
\cite{tH} and \cite{B}. Concretely we can use
\[
u=\left[
\begin{array}
[c]{cccc}
1 &  &  & \\
& q &  & \\
&  & \ddots & \\
&  &  & q^{n-1}
\end{array}
\right]
\qquad\mbox{and}\qquad
v=\left[
\begin{array}
[c]{cccc}
0 & 1 &  & \\
& 0 & \ddots & \\
&  & \ddots & 1\\
1 &  &  & 0
\end{array}
\right]
\]
where the blanks spaces are filled with zeroes. It is then an easy exercise to
verify that as mentioned above, $v$ is obtained by the Fourier transform of
$\mathbb{Z}_{n}$ from $u$, i.e. $v=F^{\ast}u\,F$ where $F$ is the Fourier
transform on $\mathbb{Z}_{n}$ in a suitable basis, namely the $n\times n$ 
unitary matrix $F_{jk}=q^{-jk}/n^{1/2}$ for $j,k=0,1,...,n-1$, and $F^{\ast}$ 
denotes its Hermitian adjoint. It is also easily
checked that $u$ and $v$ generate the algebra $M_{n}$ of all $n\times n$
complex matrices, namely the commutant of $\left\{u,v\right\}$ consists of
scalar multiples of the identity matrix, $\left\{u,v\right\}^{\prime
}=\mathbb{C}1$, where here and later we denote the $n\times n$ identity matrix
simply as $1$, since no confusion will arise.

We now take any two Hermitian matrices $x$ and $y$ such that
\[
u=e^{\frac{2\pi i}{n}x}\quad\mbox{and}\quad v=e^{\frac{2\pi i}{n}y}
\]
and define derivations $\delta_{1}$ and $\delta_{2}$ on the algebra $M_{n}$ by
the commutators
\[
\delta_{1}:=\left[y,\cdot\right]\quad\mbox{and}\quad\delta_{2}:=-\left[x,\cdot\right]
\]
which are analogues of the derivatives $\frac{1}{i}\partial_{\mu}$ in the
classical case above. Note that we have made a specific choice of real
multiplicative constants in these derivations, and moreover $x$ and $y$ are
not uniquely determined, but our analysis will not depend on the exact choices
made. 

To gain some insight about these derivations, we calculate $\delta_{1}u$ and
$\delta_{2}v$, though only for a specific choice of $x$ and $y$. Note that in
general from the definitions of $u$ and $F$ above, we have $FuF^{\ast}=
\left(  F^{\ast T}u^{T}F^{T}\right)  ^{T}=
\left(  F^{\ast}uF\right)  ^{T}=v^{T}$ where $a^{T}$ 
denotes the transpose of a matrix $a$. Also note that $x$ is a diagonal
matrix (with real values on its diagonal). As part of our specific choice we
choose $y:=F^{\ast}xF$, given a choice of $x$. It then follows that
\[
\delta_{1}u=F^{\ast}\left[  x,FuF^{\ast}\right]  F=
F^{\ast}\left[x^{T},v^{T}\right]  F=
F^{\ast}\left[  v,x\right]  ^{T}F=
F^{\ast}\left(  \delta_{2}v\right)  ^{T}F
\]
If we now consider the specific (and most obvious) choice
\[
x=\left[
\begin{array}
[c]{ccccc}
0 &  &  &  & \\
& m &  &  & \\
&  & 2m &  & \\
&  &  & \ddots & \\
&  &  &  & (n-1)m
\end{array}
\right]
\]
it is easily verified that
\[
\delta_{2}v=mv-mne_{n1}
\]
where $e_{jk}$ denotes the $n\times n$ matrix with $1$ in the $(j,k)$-position 
and zeroes elsewhere. From the formulas above it then
follows that
\[
\delta_{1}u=mu-mnF^{\ast}e_{1n}F
\]
These formulas are analogous to the classical formulas 
$\frac{1}{i}\partial_{2}e^{imx^{2}}=me^{imx^{2}}$ and 
$\frac{1}{i}\partial_{1}e^{imx^{1}}=me^{imx^{1}}$ 
respectively in terms of classical real coordinates $x^{\mu}$, but with terms 
included to compensate for finite-dimensionality. Similarly one can consider
other choices of $x$. For example taking all the entries of our current choice
modulo $n$, leads to similar formulas for $\delta_{1}u$ and
$\delta_{2}v$, but with different ``compensation'' terms which are notationally
a bit more difficult to write down than in the formulas for $\delta_{1}u$ and
$\delta_{2}v$ above. Of course, for general choices of $x$ and $y$ we
obviously also have $\delta_{1}v=0$ and $\delta_{2}u=0$ in perfect analogy to
the classical case.

Returning to general $x$ and $y$, from these derivations we can define a 
noncommutative analogue of a Laplacian as an operator on $M_{n}$:
\[
\triangle:=\delta_{1}^{2}+\delta_{2}^{2}=\delta_{\mu}\delta_{\mu}
\]
The importance of this for our purposes is clear from the classical Ricci flow
equation above where $\partial_{\mu}\partial_{\mu}$ corresponds to
$-\delta_{\mu}\delta_{\mu}$. We are using conventions ensuring that
$\triangle$ is a positive operator (see Proposition 2.1 below) and that's why
$\triangle$ corresponds to $-\partial_{\mu}\partial_{\mu}$ rather than
$\partial_{\mu}\partial_{\mu}$.

Regarding notation, when we write $\delta_{\mu}\delta_{\mu}$, we are summing
over $\mu$, but when we write $\delta_{\mu}^{2}$ there is no sum over $\mu$.

Keep in mind that $M_{n}$ is an involutive algebra, i.e. a $\ast$-algebra,
with the involution given by the usual Hermitian adjoint of a matrix $a$, which 
we denote by $a^{\ast}$ to fit into the standard C*-algebraic notation (in
conventional quantum mechanical notation it is denoted by $a^{\dagger}$).
Since the algebra is finite-dimensional, all norms on it are equivalent (so
they give the same topology on $M_{n}$) and complete. In the operator norm 
$M_{n}$ is indeed a unital C*-algebra, however the theory of C*-algebras will 
not be needed in this paper, though it is conceptually useful to think in
terms of C*-algebras, since a unital C*-algebra is a
noncommutative analogue of the C*-algebra of continuous complex-valued
functions on a compact topological space. The equivalence of norms will be 
useful in a technical sense, since it will allow us to use whichever norm is 
most convenient in any given situation. In particular the Hilbert-Schmidt norm 
obtained from the inner product
\[
\left\langle a,b\right\rangle :=\tau\left(a^{\ast}b\right)
\]
on $M_{n}$ will come into play in the next section. Here $\tau$ denotes the
usual trace on $M_{n}$, i.e. the sum of the diagonal elements of a matrix. We
think of $\tau\left(  a\right)  $ as a noncommutative integral of the
complex-valued ``function'' $a$, corresponding in the classical case to the
integral with respect to Haar measure on the torus.

We have the following simple proposition which mimics properties of partial
derivatives of complex-valued functions of two real variables on the classical
torus. For example, the first property corresponds to the fact that if both the
first partial derivatives of a function are zero globally, then the function
is constant.

\begin{proposition}
The following properties hold (where $a,b\in M_{n}$ are arbitrary):

(a) If $\delta_{1}a=\delta_{2}a=0$, then $a\in\mathbb{C}1$. Conversely,
$\delta_{\mu}1=0$.

(b) We can integrate by parts, i.e. $\tau\left(a\delta_{\mu}b\right)
=-\tau\left(b\delta_{\mu}a\right)$.

(c) The derivations $\delta_{1}$ and $\delta_{2}$ are Hermitian, i.e.
$\left\langle a,\delta_{\mu}b\right\rangle =\left\langle\delta_{\mu}a,b\right\rangle$, 
and furthermore $\left(\delta_{\mu}a\right)^{\ast}=-\delta_{\mu}\left(a^{\ast}\right)$.

(d) The operators $\delta_{1}^{2}$, $\delta_{2}^{2}$ and $\triangle$ on the
Hilbert space $M_{n}$ are positive, i.e. 
$\left\langle a,\delta_{\mu}^{2}a\right\rangle\geq0$ and $\left\langle a,\triangle a\right\rangle\geq0$.

(e) If $\left\langle a,\delta_{\mu}^{2}a\right\rangle =0$, then $\delta_{\mu}a=0$, 
for each value of $\mu$ separately.

(f) $\ker\triangle=\mathbb{C}1$

(g) $\tau\left(\triangle a\right)=0$
\end{proposition}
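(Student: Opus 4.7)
The plan is to derive all seven items from three ingredients the paper has already placed at our disposal: the explicit commutator presentation $\delta_1 = [y,\cdot]$, $\delta_2 = -[x,\cdot]$ with $x,y$ Hermitian; cyclicity of $\tau$; and the identity $\{u,v\}' = \mathbb{C}1$. Items (b) and (c) are the workhorses; (a) is the only place where the commutant computation genuinely enters; and (d)--(g) then follow essentially for free from (a), (b), (c).

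For (a), the hypothesis $\delta_1 a = \delta_2 a = 0$ says $a$ commutes with $y$ and with $x$. Since $u = e^{2\pi i x/n}$ lies in the (commutative) algebra generated by $x$ via functional calculus, and likewise $v$ in that generated by $y$, we have $[a,u] = [a,v] = 0$, hence $a \in \{u,v\}' = \mathbb{C}1$. The converse $\delta_\mu 1 = 0$ is immediate from $[x,1] = [y,1] = 0$. For (b), I expand the commutator and cycle: $\tau(a\delta_1 b) = \tau(ayb) - \tau(aby) = \tau(bay) - \tau(yab) = \tau(b[a,y]) = -\tau(b\delta_1 a)$, and identically for $\delta_2$. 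For (c), I first check the skew-adjoint property $(\delta_\mu a)^{\ast} = -\delta_\mu(a^{\ast})$, which follows directly from $x^{\ast} = x$ and $y^{\ast} = y$: e.g., $(ya - ay)^{\ast} = a^{\ast}y - ya^{\ast} = -[y,a^{\ast}]$. The Hermitian property of $\delta_\mu$ then chains as $\langle a,\delta_\mu b\rangle = \tau(a^{\ast}\delta_\mu b) = -\tau(b\,\delta_\mu a^{\ast}) = \tau\bigl(b(\delta_\mu a)^{\ast}\bigr) = \langle \delta_\mu a, b\rangle$, using (b) in the second equality and cyclicity in the last.

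The remaining items are short consequences. For (d), $\langle a,\delta_\mu^2 a\rangle = \langle \delta_\mu a,\delta_\mu a\rangle = \|\delta_\mu a\|^2 \geq 0$ by (c), and summing over $\mu$ gives positivity of $\triangle$. For (e), vanishing of this inner product forces $\|\delta_\mu a\| = 0$ and hence $\delta_\mu a = 0$. For (f), $\triangle a = 0$ implies $0 = \langle a,\triangle a\rangle = \|\delta_1 a\|^2 + \|\delta_2 a\|^2$, so both derivations annihilate $a$, and (a) gives $a \in \mathbb{C}1$; the reverse inclusion uses $\delta_\mu 1 = 0$. For (g), cyclicity yields $\tau(\delta_1 c) = \tau[y,c] = 0$ and $\tau(\delta_2 c) = -\tau[x,c] = 0$ for every $c \in M_n$, which applied to $c = \delta_\mu a$ gives $\tau(\triangle a) = 0$.

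There is no real obstacle here; the only substantive input is the commutant identity $\{u,v\}' = \mathbb{C}1$ feeding into (a), and even that reduces, via functional calculus on the Hermitian generators $x,y$, to propagating commutation with $x$ and $y$ up to commutation with $u$ and $v$. Everything else is formal Hilbert-space algebra over the tracial state $\tau$.
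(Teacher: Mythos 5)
Your proof is correct and follows essentially the same route as the paper: (a) via propagating commutation with $x,y$ to commutation with $u,v$ and invoking $\{u,v\}'=\mathbb{C}1$, (b) and (c) from cyclicity of $\tau$ and Hermiticity of $x,y$, and (d)--(g) as formal consequences. You have merely written out in full the computations the paper leaves as one-line remarks, so there is nothing to correct.
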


\begin{proof}
(a) Clearly if a matrix commutes with $x$ then it commutes with $u$, and if it
commutes with $y$ it commutes with $v$, so if it commutes with both $x$ and
$y$ it is in $\left\{u,v\right\}^{\prime}=\mathbb{C}1$. The converse is trivial.

(b) This is verified easily from the definition of the derivations and the
fact that $\tau(ab)=\tau(ba)$.

(c) The first part is just a different way of stating (b) using the second
part, which in turn follows from the definition of the derivations.

(d) and (e) follow from (c) and the fact that $\tau$ is faithful, i.e.
$\tau\left(a^{\ast}a\right)=0$ implies that $a=0$.

(f) Clearly $\triangle1=0$. On the other hand, if $\triangle a=0$, it follows that 
$0=\left\langle a,\triangle a\right\rangle =\left\langle a,\delta_{1}
^{2}a\right\rangle +\left\langle a,\delta_{2}^{2}a\right\rangle $ which means
that $\left\langle a,\delta_{\mu}^{2}a\right\rangle =0$ by (d) so 
$\delta_{\mu}a=0$ by (e). By (a) it then follows that $a\in\mathbb{C}1$.

(g) This follows directly from the fact that $\triangle$ is defined in terms
of commutators while $\tau$ is the trace.
\end{proof}

As will be seen, we really just need the abstract properties of the
derivations listed in this proposition in the rest of the paper, rather than
the explicit definitions of $u$ and $v$, and $\delta_{1}$ and $\delta_{2}$,
given above. Because of this and for notational convenience, we will from now
on work with the more abstract notation
\[
A=M_{n}
\]
Indeed, even in the proof of the proposition the explicit definitions of $u$
and $v$ are not used directly, though we did use the fact that they generate
the algebra, and their relation to $x$ and $y$. The finite-dimensionality of
$A$ will be used as well in the rest of the paper.

We now have an algebra $A$ with derivations, so we have the basic elements of a
noncommutative geometry. Next we need to discuss a class of metrics. Following
the idea in \cite{CT}, we describe a metric by a strictly positive element $c$
of the C*-algebra $A$, i.e. a Hermitian element of $A$ whose eigenvalues are
strictly positive. (We write $a>0$ to indicate that a Hermitian $a\in A$ is
strictly positive, and $a\geq0$ that it is positive, meaning that its
eigenvalues are larger or equal to zero.) This $c$ of course is the
noncommutative counterpart of the $c$ in $g_{\mu\nu}=c\delta_{\mu\nu}$ in the
classical case discussed above. Since we only consider metrics of this form,
we refer to such a $c$ itself simply as a \emph{noncommutative metric}. Note
that if $c$ is a scalar multiple of the identity matrix, then it can be
interpreted as a \emph{flat} metric, since it corresponds to a constant
function $c$ in the classical case.

We can now immediately and unambiguously write down a noncommutative version
of the classical Ricci flow equation above:
\[
\frac{d}{dt}c(t)=-\triangle\log c\left(t\right)
\]
where $c\left(  t\right)  \in A$ denotes the metric at ``time'' $t$. We say
that this is unambiguous, since, given the Laplacian $\triangle$ on $A$, 
the equation does not depend on an arbitrary choice of
order of noncommuting elements of $A$. We interpret the time derivative in the
equation in terms of any norm on $A$, since these are equivalent as already
mentioned. In particular we can therefore view it as component-wise
differentiation, but equivalently we can view the derivative as being defined
in terms of the operator norm (i.e. the C*-algebraic norm) or the
Hilbert-Schmidt norm on $A$. For any strictly positive $a\in A$, $\log a$ is
defined by the usual functional calculus, namely transform unitarily to an orthonormal
basis in which $a$ is diagonal, take $\log$ of the diagonal elements, and then
transform back to the original orthonormal basis.

As an analogue of the cigar metric mentioned above, we have
\[
c=\left(M+x^{2}+y^{2}\right)^{-1}
\]
as an example of a noncommutative metric of the form above, where $M>0$ is a real number.
(Note that by $M+x^{2}+y^{2}$ we mean $M1+x^{2}+y^{2}$, but here and later on
we often drop the identity matrix $1$ when we add a scalar multiple of $1$ to
another matrix.) This is well defined because $M+x^{2}+y^{2}>0$ implying that
$M+x^{2}+y^{2}$ is indeed invertible, since $M>0$ while $x^{2},y^{2}\geq0$ due
to the fact that $x$ and $y$ are Hermitian. This is not a soliton for Ricci
flow though, since as will be seen in the next section all metrics flow to a
constant metric as is the case for metrics on the classical torus.

\section{Existence and convergence}

In this section we study the mathematical properties of the noncommutative
Ricci flow equation introduced in the previous section. Note that this
equation is in fact a system of first order ordinary differential equations,
and the theory of such systems will therefore play a central role in this
section (see for example \cite{Ch} and \cite{MM}). In particular we show the
existence of the flow on any interval $[t_{0},\infty)$, and show that in the
$t\rightarrow\infty$ limit the Ricci flow takes any initial metric to a flat
metric. We continue with the notation introduced in the previous section.

The following inequality will shortly be used in the proof of our main result.
In it we use the exponential $e^{a}$ of a Hermitian element $a$ of $A$, which
can equivalently be defined by the functional calculus as for $\log$ above, or
by a power series, or by the analytic functional calculus.

\begin{proposition}
For any Hermitian $a\in A$ we have
\[
\tau\left(e^{a}\triangle a\right)\geq 0
\]
with equality if and only if $a\in\mathbb{C}1$, i.e. if and only if $a$ is a
scalar multiple of the identity matrix.
\end{proposition}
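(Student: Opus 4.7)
The plan is to reduce $\tau(e^a\triangle a)$ to a manifestly non-negative expression via integration by parts combined with a Duhamel-type derivative formula for $e^a$. First, I would apply Proposition 2.1(b) to each $\delta_\mu$ in $\triangle=\delta_\mu\delta_\mu$ to get
\[
\tau(e^a\triangle a)=\tau\bigl(e^a\,\delta_\mu(\delta_\mu a)\bigr)=-\tau\bigl((\delta_\mu a)\,\delta_\mu(e^a)\bigr),
\]
summed over $\mu$. The key input is then the Duhamel formula
\[
\delta_\mu(e^a)=\int_0^1 e^{sa}(\delta_\mu a)\,e^{(1-s)a}\,ds,
\]
which holds because $\delta_\mu$ is a derivation on the finite-dimensional algebra $A$ (one can verify it from the power series for $e^a$ by expanding $\delta_\mu(a^k)$ via the Leibniz rule, or by differentiating $f(s)=e^{-sa}\delta_\mu(e^{sa})$ with respect to $s$).

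Substituting this gives
\[
\tau(e^a\triangle a)=-\sum_\mu\int_0^1\tau\bigl((\delta_\mu a)\,e^{sa}(\delta_\mu a)\,e^{(1-s)a}\bigr)\,ds.
\]
Now, by Proposition 2.1(c), since $a$ is Hermitian, $\delta_\mu a$ is anti-Hermitian, so $b_\mu:=i\,\delta_\mu a$ is Hermitian; this introduces an overall factor $(-i)^2=-1$ that cancels the minus sign, leaving
\[
\tau(e^a\triangle a)=\sum_\mu\int_0^1\tau\bigl(b_\mu\,e^{sa}\,b_\mu\,e^{(1-s)a}\bigr)\,ds.
\]
Each integrand can be recognised as a squared Hilbert--Schmidt norm: setting $B_\mu(s):=e^{sa/2}\,b_\mu\,e^{(1-s)a/2}$, which is well-defined because $a$ is Hermitian so $e^{sa/2}$ is Hermitian, cyclicity of $\tau$ yields
\[
\tau(b_\mu\,e^{sa}\,b_\mu\,e^{(1-s)a})=\tau\bigl(B_\mu(s)^{\ast}B_\mu(s)\bigr)\geq 0.
\]
Summing and integrating gives the inequality.

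For the equality case, if $\tau(e^a\triangle a)=0$ then every $\tau(B_\mu(s)^{\ast}B_\mu(s))$ must vanish (the integrand is a continuous non-negative function of $s$), so by faithfulness of $\tau$ we have $B_\mu(s)=0$ for all $s\in[0,1]$ and all $\mu$. Since $e^{sa/2}$ and $e^{(1-s)a/2}$ are invertible, this forces $b_\mu=0$, i.e.\ $\delta_\mu a=0$ for $\mu=1,2$; Proposition 2.1(a) then gives $a\in\mathbb{C}1$, and the converse is immediate since $\triangle 1=0$.

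The only non-routine ingredient is the Duhamel formula for $\delta_\mu(e^a)$; once that is in hand, the rest is a matter of cyclicity of the trace plus the algebraic identity $(-i)^2=-1$ for the anti-Hermitian derivative $\delta_\mu a$, so I expect justifying (or citing) the Duhamel identity to be the only step worth dwelling on.
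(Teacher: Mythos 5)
Your proof is correct, and while it follows the same overall strategy as the paper (integrate by parts via Proposition 2.1(b), then exhibit $-\tau\left((\delta_\mu a)\,\delta_\mu e^{a}\right)$ as a combination of manifestly non-negative terms of the form $\tau\left(B^{\ast}B\right)$), the key technical device is genuinely different. The paper does not use the Duhamel formula: it first shifts $a$ to $h:=a-\lambda\geq 0$ (with $\lambda$ the smallest eigenvalue), expands $e^{h}$ as a power series, applies the Leibniz rule to each $\delta h^{j}$ to get terms $\tau\left((\delta h)h^{p}(\delta h)h^{q}\right)$, and makes each one a negative square by inserting the Hermitian fractional powers $h^{p/2}$ and $h^{q/2}$ --- which is exactly why the shift to a positive $h$ is needed there. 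Your Duhamel identity $\delta_\mu(e^{a})=\int_0^1 e^{sa}(\delta_\mu a)e^{(1-s)a}\,ds$ replaces that discrete sum over $(p,q)$ by a continuous integral over $s$, and since $e^{sa/2}$ is Hermitian and invertible for any Hermitian $a$, no shift is required; the equality case also comes out slightly more cleanly (invertibility of $e^{sa/2}$ forces $\delta_\mu a=0$ directly, versus the paper's extraction of the $j=1$ term $\tau\left((\delta h)\delta h\right)=0$). The trade-off is that you must justify the Duhamel formula itself, but your suggested derivations (termwise Leibniz on the power series with the Beta integral $\int_0^1 s^{p}(1-s)^{q}\,ds=p!\,q!/(p+q+1)!$, or differentiating $e^{-sa}\delta_\mu(e^{sa})$) are both routine in this finite-dimensional setting, so the argument is complete.
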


\begin{proof}
Let $\delta$ denote either of the two derivations $\delta_{1}$ or $\delta_{2}$. 
By Proposition 2.1(b)
\[
\tau\left(  e^{a}\delta^{2}a\right)=-\tau\left(  (\delta a)\delta e^{a}\right)
=-e^{\lambda}\tau\left( (\delta\left(a-\lambda\right) )\delta e^{a-\lambda}\right)
\]
where $\lambda$ is the smallest eigenvalue of $a$, so $h:=a-\lambda\geq 0$. We
consider
\[
\tau\left( (\delta h)\delta e^{h}\right)
=\sum_{j=0}^{\infty}\frac{1}{j!}\tau\left( (\delta h)\delta h^{j}\right)
\]
and show that each term in this series is negative (or zero). By the product
rule, $\tau\left( (\delta h)\delta h^{j}\right)$ is a sum of
terms of the form
\begin{eqnarray*}
\tau\left( (\delta h)h^{p}(\delta h)h^{q}\right)   
& = & \tau\left(h^{q/2}(\delta h)h^{p/2}h^{p/2}(\delta h)h^{q/2}\right)\\
& = & -\tau\left( (h^{p/2}(\delta h)h^{q/2}) ^{\ast}h^{p/2}(\delta h)h^{q/2}\right)\\
& \leq & 0
\end{eqnarray*}
by Proposition 2.1(c) and the fact that $h^{p/2}$ and $h^{q/2}$ are Hermitian
since $h\geq0$, where $p,q\in\left\{  0,1,2,...\right\}  $ with $p+1+q=j$.
Thus indeed $\tau\left( (\delta h)\delta h^{j}\right)\leq0$
and therefore $\tau\left( (\delta h)\delta e^{h}\right)\leq0$, 
hence $\tau\left(e^{a}\delta^{2}a\right)\geq0$, so $\tau\left(e^{a}\triangle a\right)\geq 0$.

Now suppose $\tau\left(e^{a}\triangle a\right)=0$. Since 
$\tau\left(e^{a}\delta_{\mu}^{2}a\right)\geq0$, it follows that 
$\tau\left(e^{a}\delta_{\mu}^{2}a\right)=0$ for $\mu=1,2$. Again in terms of the
notation above, it follows from $\tau\left( (\delta h)\delta h^{j}\right)\leq 0$ 
that $\tau\left( (\delta h)\delta h^{j}\right)=0$, in particular 
$0=\tau\left( (\delta h)\delta h\right)=-\left\langle\delta h,\delta h\right\rangle$. 
So $\delta_{\mu}a=\delta_{\mu}\left(h+\lambda\right)=0$, hence $a\in\mathbb{C}1$ 
by Proposition 2.1(a). The converse is trivial.
\end{proof}

The central result of this paper is the following (note that in its proof 
finite-dimensionality plays an essential role):

\begin{theorem}
Let $c_{0}$ be any initial noncommutative metric at the initial time 
$t_{0}\in\mathbb{R}$. Then the noncommutative Ricci flow equation has a unique
solution $c$ on the interval $[t_{0},\infty)$. Furthermore, the flow preserves
the trace, i.e. $\tau\left(c\left(t\right)\right)=\tau\left(c_{0}\right)$ 
for all $t\geq t_{0}$, and
\[
\lim_{t\rightarrow\infty}c\left(t\right)=c_{\infty}1
\]
in any of the equivalent norms on $A$, where
\[
c_{\infty}:=\frac{1}{n}\tau\left(c_{0}\right)
\]
with $n$ the dimension as before. Lastly, the determinant $\det c\left(t\right)$
is strictly increasing in $t$, unless $c_{0}=c_{\infty}$ in which case
$c\left(t\right)=c_{0}$ for all $t\in\mathbb{R}$.
\end{theorem}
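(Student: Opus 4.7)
The plan is a three-stage bootstrap from local ODE theory to global existence to long-time convergence. First, the map $c \mapsto -\triangle\log c$ is real-analytic, hence locally Lipschitz, on the open set of strictly positive elements of $A$, since $\log$ is smooth on $(0,\infty)$ and $\triangle$ is a bounded linear operator on the finite-dimensional space $A$. Picard-Lindel\"{o}f then gives a unique maximal solution $c(t)$ in the strictly positive cone on some interval $[t_0,T)$. Trace preservation is immediate: $\frac{d}{dt}\tau(c(t)) = -\tau(\triangle\log c(t)) = 0$ by Proposition 2.1(g), so $\tau(c(t)) = \tau(c_0)$ throughout $[t_0,T)$.

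Next, I would establish determinant monotonicity via Jacobi's formula $\frac{d}{dt}\log\det c = \tau(c^{-1}\frac{dc}{dt})$, which along the flow reads
\[
\frac{d}{dt}\log\det c(t) = -\tau\bigl(c(t)^{-1}\triangle\log c(t)\bigr) = \tau\bigl(e^{-\log c(t)}\triangle(-\log c(t))\bigr).
\]
Applying Proposition 3.1 to the Hermitian element $a := -\log c(t)$, the right side is $\geq 0$, with equality iff $c(t) \in \mathbb{C}1$. Combined with constant trace, the eigenvalues $\lambda_i(t)$ of $c(t)$ obey $\sum_i\lambda_i(t) = \tau(c_0)$ and $\prod_i\lambda_i(t) \geq \det c_0 > 0$, yielding the uniform bounds $\det c_0/\tau(c_0)^{n-1} \leq \lambda_i(t) \leq \tau(c_0)$. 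Thus $c(t)$ stays in a compact subset of the strictly positive cone of $A$, forcing the maximal interval to be $[t_0,\infty)$.

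For convergence, I would use the Lyapunov function $V(t) := \tau(c(t)\log c(t))$. Via the standard identity $\frac{d}{dt}\tau(f(c)) = \tau(f'(c)\frac{dc}{dt})$ with $f(x) = x\log x$, together with the vanishing of $\tau(\frac{dc}{dt})$, one obtains
\[
\dot V(t) = \tau\bigl((\log c(t))\,\frac{dc}{dt}\bigr) = -\bigl\langle\log c(t),\triangle\log c(t)\bigr\rangle \leq 0
\]
by Proposition 2.1(d), with equality iff $c(t) \in \mathbb{C}1$ by Proposition 2.1(e),(f). Since $V$ is bounded below on the compact orbit and non-increasing, a LaSalle-style argument forces the $\omega$-limit set to consist of stationary points of the flow (i.e.\ flat metrics), and trace preservation pins the unique limit to $c_\infty 1$ with $c_\infty = \tau(c_0)/n$. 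Finally, if $c_0 = c_\infty 1$ then $\triangle\log c_0 = 0$ and $c(t)\equiv c_0$ solves the equation on all of $\mathbb{R}$, hence is the global solution by uniqueness; otherwise uniqueness forbids $c(t) = c_\infty 1$ at any finite $t$, so the determinant is strictly increasing throughout by the monotonicity step. The hardest step is the convergence argument, which requires a careful LaSalle-type deduction, but the compactness just established together with the strict inequality in Proposition 3.1 makes this routine.
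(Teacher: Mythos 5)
Your proposal is correct, and the existence part (local Picard--Lindel\"{o}f solution in the open positive cone, trace conservation via Proposition 2.1(g), determinant monotonicity via Jacobi's formula and Proposition 3.1, eigenvalue bounds forcing the orbit into a compact subset of the cone) follows essentially the same route as the paper. Where you genuinely diverge is the convergence argument. The paper uses the Lyapunov function $v(a)=\left\| a-c_{\infty}\right\| _{2}^{2}$, whose derivative along the flow is $-2\tau\left(c\triangle\log c\right)$; its sign and equality case again require the full strength of Proposition 3.1, and the paper then runs an explicit compactness argument (a closed bounded level-type set $V$, continuity of $a\mapsto-2\tau\left(e^{a}\triangle a\right)$, and the observation that $\frac{d}{dt}v$ must come arbitrarily close to $0$) rather than quoting LaSalle. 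You instead take $V(t)=\tau\left(c\log c\right)$, i.e.\ the negative von Neumann entropy, whose derivative $-\left\langle \log c,\triangle\log c\right\rangle$ needs only the elementary positivity in Proposition 2.1(d),(e),(a) rather than Proposition 3.1, and you close with LaSalle's invariance principle on the precompact orbit. This buys two things: the convergence step becomes more elementary (Proposition 3.1 is then needed only for the determinant/global-existence step), and you effectively absorb the entropy monotonicity that the paper proves separately as Theorem 4.1. The one place you are glib is the identity $\frac{d}{dt}\tau\left(f(c)\right)=\tau\left(f^{\prime}(c)\frac{dc}{dt}\right)$ for $f(x)=x\log x$: this is true, but in the matrix setting it deserves a justification (cyclicity of the trace for polynomials plus approximation, or the argument the paper spells out for $f=\exp$ in the proof of Theorem 4.1); as stated it is a citation to folklore rather than a proof. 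With that identity supplied, your argument is complete and arguably cleaner.
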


\begin{proof}
The set $A_{sa}$ of all Hermitian elements of $A$ is a real vector space
(not that its elements have real entries, but it is a vector space
with respect to real scalar multiplication). If $c\left(t\right)>0$,
then $\log c\left(t\right)$ is Hermitian, hence so is $\triangle\log c\left(t\right)$ 
because of Proposition 2.1(c). So the Ricci flow equation is expressed wholly in 
terms of the space $A_{sa}$. In particular $\frac{d}{dt}c(t)$ is Hermitian, hence
$c\left(t\right)$ at least remains in $A_{sa}$ under the flow, so the eigenvalues of
$c\left(t\right)$ remain real under the Ricci flow. However, due to 
the $\log$, the equation is only defined on the set of strictly
positive matrices. (In conventional ordinary differential equation terms, 
the equation is formulated for some domain in $\mathbb{R}^{n^2}$, the latter 
representing $A_{sa}$.) Since the set of strictly positive matrices is open 
in $A_{sa}$, we have solutions on open intervals which are short enough, 
and in order to show that the Ricci flow exists for all $t\geq t_{0}$, 
we need to show that $c\left(t\right)>0$ remains true under the flow, 
and that the solution does not blow up in finite time. The solution will necessarily be 
unique, since it is a system of first order ordinary differential equations.

First note that by Proposition 2.1(g),
\[
\frac{d}{dt}\tau\left(c\right)=\tau\left(\frac{dc}{dt}\right)
=-\tau\left(\triangle\log c\right)=0
\]
on any interval on which the solution exists, i.e. the trace is preserved
under Ricci flow. By Proposition 3.1 and the Ricci flow equation, we have
\[
0\leq\tau\left(e^{-\log c}\triangle (-\log c)\right)
=\tau\left(c^{-1}\frac{dc}{dt}\right)=\frac{d}{dt}\log(\det c)
\]
with equality to $0$ if and only if $c\left(t\right)\in\mathbb{C}1$, and
where the last equality is a standard identity from matrix analysis
(see for example \cite[Section 6.5]{HJ} where the proof is outlined using
properties of $\det c$). It follows that $\det c$ is strictly increasing in
$t$, or constant in the case $c\left(t\right)\in\mathbb{C}1$, therefore
$\det c$ remains larger than some strictly positive real number under the
Ricci flow. However, since Ricci flow preserves the trace, none of the
eigenvalues of $c$ can become bigger than $nc_{\infty}$ under the flow while
they are all positive, so should one of the eigenvalues go to zero under the
flow, it would follow that $\det c\left(t\right)$ would go to zero, which
is a contradiction. Since none of the eigenvalues go to zero, this then also
proves that all the eigenvalues remain strictly positive and bounded by
$nc_{\infty}$, which means that $c\left(t\right)>0$ remains true under the
flow and the solution $c$ does not blow up. Therefore we have shown that the
flow exists for all $t\in\lbrack t_{0},\infty)$.

Also note that $c_{0}\neq c_{\infty}$ along with the uniqueness of the
solution, implies that $c\left(t\right)$ is not in $\mathbb{C}1$ for any
$t$. For if $c\left(t\right)$ was in $\mathbb{C}1$ for some $t$, it would
be there for all $t\in\lbrack t_{0},\infty)$, since it is clearly a fixed
point of the Ricci flow.

All that remains is to show that $c\left(t\right)$ converges to
$c_{\infty}1$. It is going to be convenient to work in terms of the
Hilbert-Schmidt norm $\left\|\cdot\right\|_{2}$ given by 
$\left\|a\right\|_{2}:=\left\langle a,a\right\rangle ^{1/2}=\tau\left(a^{\ast}a\right)^{1/2}$
for all $a\in A$. Consider the function $v$ on $A$ defined
by
\[
v\left(a\right):=\left\|a-c_{\infty}\right\|_{2}^{2}
\]
and note that since Ricci flow preserves the trace $\tau$, we have
\[
\frac{d}{dt}v\left(c\right)=\tau\left(\frac{d}{dt}(c-c_{\infty})^{2}\right)
=2\tau\left(  c\frac{dc}{dt}\right)=-2\tau\left(c\triangle\log c\right)\leq 0
\]
by Proposition 3.1. In other words, $v(c(t))$ is decreasing in $t$
and bounded from below by $0$, therefore
\[
L:=\lim_{t\rightarrow\infty}v\left(c\left(t\right)\right)
\]
exists. We show that $L=0$, since that will imply that $c\left(t\right)$
converges to $c_{\infty}1$. Set
\[
V:=\left\{a\in A_{sa}:\frac{1}{n}\tau\left(e^{a}\right)
=c_{\infty}\quad\mbox{and}\quad L\leq v\left(e^{a}\right)\leq L+1\right\}
\]
which is norm closed, since $A_{sa}$ is. But $V$ is also bounded and by 
finite-dimensionality of $A$ it follows that $V$ is compact. By definition of $L$ and
the conservation of $\tau\left(c\left(t\right)\right)$ under Ricci
flow, there exists a $t_{1}\geq t_{0}$ such that $\log c\left(t\right)\in V$ 
for all $t>t_{1}$. Also note that $\frac{d}{dt}v\left(c\left(t\right)\right)$ 
comes arbitrarily close to $0$ for $t>t_{1}$, since suppose that
$\frac{d}{dt}v\left(c\left(t\right)\right)\leq-\varepsilon$ for all $t>t_{1}$ 
for some $\varepsilon>0$, then $v\left(c\left(t\right)\right)\leq v_{0}-\varepsilon t$ 
for all $t>t_{1}$ for some $v_{0}\in\mathbb{R}$, contradicting the fact that 
$v\left(c\left(t\right)\right)\geq 0$ for all $t>t_{0}$ by the definition of $v$.
Furthermore, the function
\[
f:A_{sa}\rightarrow\mathbb{C}:a\mapsto-2\tau\left(e^{a}\triangle a\right)
\]
is continuous, so $f\left(V\right)$ is compact and therefore closed. But
$f\left(\log c\right)=\frac{d}{dt}v\left(c\right)$ comes arbitrarily
close to $0$ for $t>t_{1}$ therefore $0\in f\left(V\right)$. In other
words there is an $a\in A_{sa}$ such that $\frac{1}{n}\tau\left(e^{a}\right)=c_{\infty}$, 
$v\left(  e^{a}\right)  \geq L\geq 0$ and $\tau\left(e^{a}\triangle a\right)=0$.
However, the last condition implies that $a\in\mathbb{C}1$ by Proposition 3.1, 
so $e^{a}=c_{\infty}1$ by the first condition, which means that 
$v\left(e^{a}\right)=0$. We conclude that $L=0$.
\end{proof}

Note that $c\left(t\right)=c_{\infty}1$ is the unique fixed point of the
Ricci flow which has trace $nc_{\infty}$, since if $dc/dt=0$, then
$\triangle\log c=0$, so by Proposition 2.1(f) $c\left(  t\right)
\in\mathbb{C}1$ from which it follows that $c\left(t\right)=c_{\infty}1$.
The theorem above says that the metric flows to this fixed point. Furthermore,
the preservation of the trace corresponds the the preservation of the total
area in the classical case. Also keep in mind that the solution referred to in
the theorem is continuously differentiable by the general theory of systems of
first order ordinary differential equations, although we can also see it from
the fact that $\triangle\log c$ is differentiable, since $c$ is (see for
example \cite[Section 6.6]{HJ}), which means that $d^{2}c/dt^{2}$ exists by
the noncommutative Ricci flow equation.

\section{Density matrices and entropy}

In Theorem 3.2 we found that the Ricci flow exists and converges to a flat 
metric. It is now natural to study further qualitative features of the flow. 
In particular we study a monotonicity property of the flow in terms of an 
entropy. In the classical case monotonicity properties are often useful in a 
qualitative analysis of geometric flows. See \cite[Section 8]{H2} for a 
case in point. One can therefore expect monotonicity properties to be useful in 
the noncommutative case as well. In fact, we already saw a simple instance of 
this as part of the proof of Theorem 3.2 where the monotonicity of $\det c$, or 
equivalently of $\log(\det c)=\tau(\log c)$, helped us to show the existence of 
the Ricci flow on $[t_0,\infty )$. Here we consider a particularly simple 
entropy similar in form to the function $\tau(\log c)$ just mentioned, which 
can likewise be defined directly in terms of the metric and appears to fit very 
naturally into the theory we have set up so far, but is interestingly enough 
also well known from usual quantum mechanics.

Note that if in the noncommutative Ricci flow equation we set $t^{\prime}=\kappa t$ 
and $\rho\left(t^{\prime}\right)=\kappa c\left(t^{\prime}/\kappa\right)$ 
for any constant $\kappa>0$, then
\[
\frac{d}{dt}\rho\left(t\right)=-\triangle\log\rho\left(t\right)
\]
(where we have written $t$ instead of $t^{\prime}$) so we have simply scaled
the Ricci flow. In particular, by the conservation of trace given by Theorem
3.2, we can choose $\kappa$ such that $\rho\left(t\right)$ is a strictly
positive density matrix, meaning $\rho\left(t\right)>0$ and 
$\tau\left(\rho\left(t\right)\right)=1$, for all $t\geq t_{0}$. By the 
previous section we then know that under the Ricci flow the
density matrix flows to the unique fixed point of the flow, which in this case
is the density matrix of maximum von Neumann entropy, namely $\frac{1}{n}1$.
But we can say more:

\begin{theorem}
Consider the noncommutative Ricci flow of a strictly positive density matrix
$\rho\left(t\right)$, i.e. $c=\rho$ in Theorem 3.2. Let
\[
S:=-\tau\left(\rho\log\rho\right)
\]
be the von Neumann entropy of $\rho$. Then either $\rho\left(t\right)$ is the
density matrix of maximum entropy (the fixed point of the Ricci flow), or $S$
is a strictly increasing function of $t$ converging to the maximum entropy as
$t\rightarrow\infty$.
\end{theorem}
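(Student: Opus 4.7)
The plan is to differentiate $S$ directly along the flow, recognise $dS/dt$ as $\langle\log\rho,\triangle\log\rho\rangle$, and then extract all three claims from Proposition 2.1 and Theorem 3.2. Concretely, for a smooth scalar function $f$ I would use the trace identity
\[
\frac{d}{dt}\tau\bigl(f(\rho(t))\bigr) = \tau\!\left(f'(\rho(t))\,\frac{d\rho}{dt}\right),
\]
which is immediate for polynomial $f$ from cyclicity of $\tau$ and extends to $f(x)=x\log x$ by the functional calculus, since by Theorem 3.2 the spectrum of $\rho(t)$ stays bounded away from $0$. Taking $f'(x)=\log x+1$, using the trace conservation $\tau(d\rho/dt)=0$ from Theorem 3.2, and then the Ricci flow equation $d\rho/dt=-\triangle\log\rho$, yields
\[
\frac{dS}{dt} = -\tau\!\left(\log\rho\cdot\frac{d\rho}{dt}\right) = \tau(\log\rho\,\triangle\log\rho) = \langle\log\rho,\triangle\log\rho\rangle.
\]

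By Proposition 2.1(d) this is non-negative, so $S$ is non-decreasing. For the equality case I would split $\langle\log\rho,\triangle\log\rho\rangle = \sum_{\mu}\langle\log\rho,\delta_\mu^2\log\rho\rangle$ and invoke Proposition 2.1(e) to deduce $\delta_\mu\log\rho=0$ for $\mu=1,2$, then Proposition 2.1(a) to conclude $\log\rho\in\mathbb{C}1$; together with $\tau(\rho)=1$ this pins $\rho$ down to $\frac{1}{n}1$. Uniqueness of the first-order ODE system, as pointed out in the paragraph after the proof of Theorem 3.2, then forces $\rho(t)$ to equal this fixed point either for all $t$ or for no $t$, so in the nontrivial branch $dS/dt>0$ strictly throughout the flow, making $S$ strictly increasing.

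For the convergence, Theorem 3.2 applied to $\rho$ gives $\rho(t)\to\frac{1}{n}1$ in norm. Since the spectrum of $\rho(t)$ stays uniformly bounded away from $0$ for $t$ large, the functional $\sigma\mapsto -\tau(\sigma\log\sigma)$ is continuous on a neighbourhood of $\frac{1}{n}1$, and hence $S(t)\to -\tau\!\left(\frac{1}{n}1\cdot\log\frac{1}{n}\right)=\log n$, which is indeed the maximum von Neumann entropy on $n$-dimensional density matrices. The main delicate point throughout is the trace differentiation identity in the first step, which needs care because $\rho$ and $d\rho/dt$ do not commute; everything after that is essentially bookkeeping with Proposition 2.1 and Theorem 3.2.
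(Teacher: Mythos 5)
Your proposal is correct in substance and arrives at exactly the same key formula as the paper, namely $\frac{dS}{dt}=\tau\left(l\triangle l\right)$ with $l=\log\rho$, followed by the same positivity-plus-kernel argument ($\tau\left(a\triangle a\right)\geq0$ with equality iff $a\in\mathbb{C}1$, via Proposition 2.1(d),(e),(a)) and the same appeal to Theorem 3.2 for convergence. The one genuine difference is the route to the trace-differentiation step, and it is worth comparing. You invoke the general identity $\frac{d}{dt}\tau\left(f(\rho)\right)=\tau\left(f^{\prime}(\rho)\,\dot{\rho}\right)$ for $f(x)=x\log x$; the paper instead writes $S=-\tau\left(e^{l}l\right)$ implicitly via the product rule, $\frac{d}{dt}S=-\tau\left(\dot{\rho}\log\rho\right)-\tau\left(\rho\frac{d}{dt}\log\rho\right)$, and kills the second term by proving the \emph{specific} identity $\tau\left(e^{l}\frac{dl}{dt}\right)=\frac{d}{dt}\tau\left(e^{l}\right)$ by termwise differentiation of the globally convergent exponential series, justified with the Weierstrass test. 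That detour through $\exp$ is deliberate: it reduces everything to an entire function, where the cyclicity argument you use for polynomials applies term by term with an easy uniform bound. Your more general lemma is true, but the phrase ``extends to $f(x)=x\log x$ by the functional calculus'' is not yet an argument --- the functional calculus by itself does not yield differentiation formulas, and $x\log x$ is neither a polynomial nor entire. To close this you would need, e.g., polynomials $p_{k}$ with $p_{k}\rightarrow f$ and $p_{k}^{\prime}\rightarrow f^{\prime}$ uniformly on a compact interval $[\alpha,\beta]\subset(0,\infty)$ containing the spectra of $\rho(t)$ for $t$ in a compact time interval (such an interval exists by continuity and strict positivity of $\rho$), together with a uniform-convergence argument to pass the derivative through the limit. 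This is standard and fillable, so I would not call it a gap, but it is precisely the point the paper's proof is engineered to handle carefully, and your write-up currently glosses over it. The remaining pieces --- the dichotomy via uniqueness of the ODE (the fixed point is attained for all $t$ or for no $t$), and the continuity of $\sigma\mapsto-\tau\left(\sigma\log\sigma\right)$ near $\frac{1}{n}1$ giving $S(t)\rightarrow\log n$ --- match the paper's reasoning (the latter is simply asserted there as following from Theorem 3.2) and are fine.
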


\begin{proof}
The first fact we need is that for any Hermitian $a\in A$ we have
\[
\tau\left(a\triangle a\right)\geq 0
\]
with equality if and only if $a\in\mathbb{C}1$. The inequality follows from
Proposition 2.1(d). If $\tau\left(a\triangle a\right)=0$, it also follows
from Proposition 2.1(d) that $\left\langle a,\delta_{\mu}^{2}a\right\rangle=0$, 
so $\delta_{\mu}a=0$ by Proposition 2.1(e), hence $a\in\mathbb{C}1$ by
Proposition 2.1(a).

The second fact we prove is the following matrix analysis identity:
\[
\tau\left(e^{l}\frac{dl}{dt}\right)=\frac{d}{dt}\tau\left(e^{l}\right)
\]
where we have set $l:=\log\rho$. To see this, begin by noticing that since
$d\rho/dt$ exists and is continuous as mentioned in Section 3, the
same is true for $dl/dt$ (see for example \cite[Section 6.6]{HJ}).
Regarding the series expansion
\[
\tau\left(e^{l}\right)=\sum_{j=0}^{\infty}\frac{1}{j!}\tau\left(l^{j}\right)
\]
we have
\[
\frac{d}{dt}\tau\left(l^{j}\right)
=\tau\left(\frac{dl}{dt}l^{j-1}+l\frac{dl}{dt}l^{j-2}+...+l^{j-1}\frac{dl}{dt}\right)  
=j\tau\left(l^{j-1}\frac{dl}{dt}\right)
\]
so on any time interval $\left[t_{1},t_{2}\right]$ we have in terms of the
operator norm $\left\|\cdot\right\|$ on $A$ that for every $t\in\left[t_{1},t_{2}\right]$,
\[
\left\|\frac{d}{dt}\tau\left(l\left(t\right)^{j}\right)\right\|\leq jnM^{j-1}N
\]
where $M$ and $N$ are respectively the maxima of $t\mapsto\left\|l\left(t\right)\right\|$ 
and $t\mapsto\left\|\frac{d}{dt}l\left(t\right)\right\|$ on $\left[t_{1},t_{2}\right]$ 
(these maxima exist, since $l$ and $dl/dt$ are continuous). Since
\[
\sum_{j=0}^{\infty}\frac{1}{j!}jnM^{j-1}N
\]
is convergent, it follows from the Weierstrass test that
\[
\sum_{j=0}^{\infty}\frac{1}{j!}\frac{d}{dt}\tau\left(l^{j}\right)
\]
converges uniformly on $\left[t_{1},t_{2}\right]$, hence we can
differentiate the series for $\tau\left(e^{l}\right)$ termwise:
\[
\frac{d}{dt}\tau\left(e^{l}\right)
=\sum_{j=0}^{\infty}\frac{1}{j!}\frac{d}{dt}\tau\left(  l^{j}\right)
=\sum_{j=1}^{\infty}\frac{1}{\left(j-1\right)!}\tau\left(l^{j-1}\frac{dl}{dt}\right)
=\tau\left(e^{l}\frac{dl}{dt}\right)
\]
as required. (Note that this works for general $c$ instead of $\rho$ as well.)

However, as $\rho$ is a solution of the Ricci flow, it follows from this
identity that
\[
\tau\left(\rho\frac{d}{dt}\log\rho\right)=\frac{d}{dt}\tau\left(\rho\right)=0
\]
by Theorem 3.2. Now, if $S$ is not at its maximum and therefore $\rho$ and
thus $l$ are not in $\mathbb{C}1$, it follows by again using the Ricci flow
equation that
\[
\frac{d}{dt}S
=-\tau\left(\frac{d\rho}{dt}\log\rho\right)-\tau\left(\rho\frac{d}{dt}\log\rho\right)
=\tau\left(l\triangle l\right)
>0
\]
from the inequality above. That is to say, $S$ is strictly increasing in $t$.
That $S$ converges to the maximum entropy follows from Theorem 3.2, as already 
mentioned.
\end{proof}

Note that in the classical case various entropies related to Ricci flow have 
been studied extensively (see for example \cite[Section 7]{H2}, \cite{Cho} 
and \cite{P1}) but they are quite different from the entropy we studied here, 
even aside from the noncommutative setting, as they are not defined directly 
in terms of the metric, but typically rather in terms of the scalar curvature.

\section{Scalar curvature}

If we define the classical scalar curvature $R$ in terms of the Ricci tensor
$R_{\mu\nu}$ by $R:=R_{\mu}^{\mu}$ as usual, then in the case of the classical
metric $g_{\mu\nu}=c\delta_{\mu\nu}$, we obtain
\[
R=-\frac{1}{c}\partial_{\mu}\partial_{\mu}\log c
\]
which leads to ambiguity when adapted directly to the noncommutative case,
i.e. in general $c^{-1}$ and $\triangle\log c$ will not commute in
$c^{-1}\triangle\log c$. However, one can naturally obtain $R$ from the Ricci
flow without any such ambiguity in the noncommutative case. Indeed, in the
classical case it is easily verified that given a solution $c$ to the Ricci
flow,
\[
R\left(t\right)=-\frac{\partial}{\partial t}\log c\left(t\right)
\]
is the scalar curvature corresponding to the metric $c\left(t\right)$.

The corresponding formula can be used without any ambiguity in the
noncommutative case as the \emph{definition} of the scalar curvature at any
time $t$ in the noncommutative Ricci flow:
\[
R\left(t\right):=-\frac{d}{dt}\log c\left(t\right)
\]
This formula corresponds to \cite[eq. (121)]{CM}, although there it is not a
definition, but follows from an alternative definition of scalar curvature. In
particular, given a noncommutative metric $c_{0}$, simply consider the Ricci
flow $c$ starting there, at time $t=0$, and the scalar curvature of $c_{0}$ is
then defined to be
\[
R_{0}:=-\frac{d}{dt}\log c\left(t\right)|_{t=0}
\]
Since $\log c\left(  t\right)  $ is a Hermitian matrix, so is $R_{0}$, so it
corresponds to a real-valued function on a classical surface as it should, in
contrast to $c^{-1}\triangle\log c$ or $\left(\triangle\log c\right)c^{-1}$ 
which are not in general Hermitian.

The noncommutative scalar curvature will not be easy to evaluate analytically
in general, although numerically it would be possible. Nevertheless we can
still study some of properties of the noncommutative scalar curvature defined
in this way, which is what we now turn to.

The basic property is that the scalar curvature has zero average, as it does
on the classical torus. Keep in mind that in the classical case the average
follows from an integral over the surface, where we need to include the metric
as a factor to have the correct measure, i.e. the integral of a function $f$
over the surface is given by $\int fcdx^{1}dx^{2}$. Similarly we include the
metric $c$ in the noncommutative case as well, exactly as is done in
\cite{CT}. We formulate this along with some related elementary properties:

\begin{proposition}
If $R_{0}$ is the scalar curvature of the noncommutative metric $c_{0}$,
then
\[
\tau\left(c_{0}R_{0}\right)=0
\]
In particular, if $R_{0}$ is constant, i.e. $R_{0}\in\mathbb{C}1$, then
$R_{0}=0$, which in turn holds if and only if $c_{0}$ is constant, i.e. the
metric is flat.
\end{proposition}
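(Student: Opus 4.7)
The plan is to reduce $\tau(c_0R_0)=0$ to the trace preservation established in Theorem 3.2, using the matrix-analysis identity already proved inside the argument for Theorem 4.1. Setting $l(t):=\log c(t)$, that identity states $\tau\bigl(e^{l}\tfrac{dl}{dt}\bigr)=\tfrac{d}{dt}\tau(e^{l})$, and the proof of Theorem 4.1 explicitly remarks that it applies to any $c$, not only to normalized density matrices. Rewriting it as $\tau\bigl(c\,\tfrac{d}{dt}\log c\bigr)=\tfrac{d}{dt}\tau(c)$ and using Theorem 3.2, the right-hand side is identically zero along the Ricci flow. Evaluating at $t=0$ and invoking the definition $R_0=-\tfrac{d}{dt}\log c(t)\big|_{t=0}$ yields $\tau(c_0R_0)=0$.

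For the second assertion, suppose $R_0\in\mathbb{C}1$, so $R_0=\alpha 1$ for some $\alpha$. Then the identity just obtained gives $0=\tau(c_0R_0)=\alpha\,\tau(c_0)$, and since $c_0>0$ we have $\tau(c_0)>0$, forcing $\alpha=0$ and hence $R_0=0$. The ``if'' direction of the last equivalence is immediate: if $c_0=\beta 1$ with $\beta>0$, then $\log c_0=(\log\beta)1\in\mathbb{C}1$, so $\triangle\log c_0=0$ by Proposition 2.1(a) (or (f)), the constant path $c(t)\equiv c_0$ solves the Ricci flow by uniqueness, and $R_0=-\tfrac{d}{dt}\log c(t)\big|_{t=0}=0$.

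The only step requiring genuine thought is the ``only if'' direction: from $R_0=0$ one must get back to $c_0\in\mathbb{C}1$. Given $R_0=0$ we have $\tfrac{dl}{dt}\big|_{t=0}=0$. The goal is to conclude $\tfrac{dc}{dt}\big|_{t=0}=0$, whereupon the Ricci flow equation gives $\triangle\log c_0=0$ and Proposition 2.1(f) forces $\log c_0\in\mathbb{C}1$, i.e.\ $c_0\in\mathbb{C}1$. This implication $\tfrac{dl}{dt}=0\Rightarrow\tfrac{de^{l}}{dt}=0$ is exactly where a small amount of care is needed, since in the noncommutative setting $\tfrac{d}{dt}e^{l}$ is not simply $e^{l}\tfrac{dl}{dt}$. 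The cleanest route is the termwise-differentiation argument already set up in the proof of Theorem 4.1: each derivative $\tfrac{d}{dt}\tau(l^{j})$ there is really the trace of a sum of products in which $\tfrac{dl}{dt}$ occurs as a factor, and the same uniform convergence on compact time intervals (justified there by the Weierstrass test) lets one differentiate the matrix-valued series $e^{l}=\sum\tfrac{1}{j!}l^{j}$ termwise in any norm on $A$; every term then vanishes at $t=0$, so $\tfrac{dc}{dt}\big|_{t=0}=0$ as required. This is the main, and essentially only, technical obstacle.
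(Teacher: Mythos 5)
Your proof is correct, and most of it coincides with the paper's: the identity $\tau\bigl(e^{l}\tfrac{dl}{dt}\bigr)=\tfrac{d}{dt}\tau\bigl(e^{l}\bigr)$ combined with trace preservation to get $\tau(c_{0}R_{0})=0$, the deduction $R_{0}\in\mathbb{C}1\Rightarrow R_{0}=0$ by dividing by $\tau(c_{0})>0$, and the easy direction $c_{0}\in\mathbb{C}1\Rightarrow R_{0}=0$ via the fixed point are exactly the paper's steps. Where you genuinely diverge is the converse $R_{0}=0\Rightarrow c_{0}\in\mathbb{C}1$. You use the full matrix hypothesis $\tfrac{d}{dt}\log c\big|_{t=0}=0$ to conclude $\tfrac{dc}{dt}\big|_{t=0}=0$ (legitimately: termwise differentiation of $e^{l}=\sum\tfrac{1}{j!}l^{j}$ is justified by the same Weierstrass bound as in Theorem 4.1, and every term of $\tfrac{d}{dt}l^{j}$ contains a factor $\tfrac{dl}{dt}$), whence $\triangle\log c_{0}=0$ and Proposition 2.1(f) finishes. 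The paper instead computes
\[
\tau\left(e^{-\log c_{0}}\triangle(-\log c_{0})\right)=\tau\left(c^{-1}\frac{dc}{dt}\right)\Big|_{t=0}=\frac{d}{dt}\tau\left(\log c\right)\Big|_{t=0}=-\tau\left(R_{0}\right)=0
\]
and invokes the equality case of Proposition 3.1. Both arguments are valid, and each buys something: yours is more elementary, needing only $\ker\triangle=\mathbb{C}1$ rather than the sharper inequality $\tau(e^{a}\triangle a)\geq 0$ with its equality analysis; the paper's uses only the scalar consequence $\tau(R_{0})=0$ of the hypothesis, so it actually establishes the stronger fact that the metric is already flat whenever the trace of its scalar curvature vanishes, not merely when $R_{0}$ vanishes as a matrix.
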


\begin{proof}
Using the identity derived in the proof of Theorem 4.1, we have
\[
\tau\left(c_{0}R_{0}\right) 
= -\tau\left. \left(c\frac{d}{dt}\log c\right)\right|_{t=0}
= -\left. \frac{d}{dt}\tau\left(c\right)\right|_{t=0}
= 0
\]
by Theorem 3.2.

If $R_{0}$ is constant, say $R_{0}=r1$ for some real number $r$, it follows
that $r=\tau\left(c_{0}R_{0}\right)/\tau\left(c_{0}\right)=0$, so
$R_{0}=0$.

If $c_{0}$ is constant, i.e. $c_{0}\in\mathbb{C}1$, then it is a fixed point
of the Ricci flow, so $R_{0}=0$. Conversely, if $R_{0}=0$, then using the
Ricci flow equation as well as the matrix analysis identity appearing in
the proof of Theorem 3.2, we have
\begin{eqnarray*}
\tau\left(e^{-\log c_{0}}\triangle (-\log c_{0})\right)   
& = & \tau\left. \left(c^{-1}\frac{dc}{dt}\right)\right|_{t=0}
= \frac{d}{dt}\log\left(\det c\right)|_{t=0} \\
& = & \left. \frac{d}{dt}\tau\left(  \log c\right)\right|_{t=0}
= -\tau\left(R_{0}\right)
= 0
\end{eqnarray*}
from which it follows that $c_{0}\in\mathbb{C}1$ by Proposition 3.1.
\end{proof}

\section{Further work}

A number of points have not been addressed in this paper. One is the existence
and behaviour of the noncommutative Ricci flow on $(-\infty,t_{0}]$. Another
is whether the convergence of the noncommutative Ricci flow is exponential
(see \cite[Section 5]{H2} for the classical case). One could also consider
generalizing the setting. For example in \cite{DS} and \cite{Ro} more general
setups for metrics on the noncommutative torus are considered. And of course
one could attempt to study Ricci flow and the resulting scalar curvature on
other matrix geometries, like the fuzzy sphere \cite{M}, or even more general
frameworks like noncommutative Riemann surfaces \cite{A2}, \cite{A3}. It would
also be possible to study the noncommutative Ricci flow equation numerically.

The evolution of $R$ under Ricci flow appears to be another worthwhile problem
to explore. For example to find lower and upper bounds for the smallest and
largest eigenvalue of $R\left(t\right)$ respectively, in analogy to the
classical case \cite{H2} (also see for example \cite[Chapter 5]{CK}). This
would also open up the possibility to study entropy defined in terms of the
scalar curvature as in for example \cite{H2} and \cite{Cho}.

A study of the connection between the noncommutative Ricci flow as studied in
this paper and the other approaches mentioned in the Introduction might be
insightful. This would somehow have to involve a large $n$ ``limit'' of the
approach in this paper, in which $m/n$ in $q=e^{2\pi im/n}$ would have to
converge to a specified value $\theta$. Possibly ideas from \cite{PV},
\cite{LLS} and \cite{AG}, or \cite{R} and \cite{L} would be relevant here.

More speculatively, the fact that $c$ can be normalized to a density matrix by
scaling the ``time'' parameter is suggestive that there may be an
interpretation or use of the noncommutative Ricci flow in quantum mechanics. 
However, since the Ricci flow equation is not linear, it is not the physical 
time-evolution of a quantum system. Nevertheless it may be interesting to 
explore whether it has applications in for example quantum information. More 
broadly it may be fruitful to study if the geometric view of a density matrix 
as a metric in noncommutative geometry has any value in quantum mechanics and 
quantum information and if the Ricci scalar curvature has a quantum mechanical 
interpretation.

\textbf{Acknowledgement.} This research was supported by the National Research 
Foundation of South Africa. I thank the referees for suggestions to improve the
exposition.

\end{document}